\newcommand{\keyclave}[1]{\textbf{\textit{Palabras Clave -- }} #1}
\newcommand{\keyw}[1]{\textbf{\textit{Keywords -- }} #1}
\newtheorem{theorem}{Teorema}
\newtheorem{prop}{Proposici\'on}
\newtheorem{lemma}{Lema}
\newtheorem{corollary}{Corolario}
\def\W2{W^{1,2}({\cal O}(M))}
\def\1half{\frac{1}{2}}
\newcommand{\beq}{\begin{equation*}}
\newcommand{\eeq}{\end{equation*}}
\newcommand{\be}{\begin{equation}}
\newcommand{\ee}{\end{equation}}
\newcommand{\Prob}{\mathbb{P}}
\newcommand{\Mean}{\mathbb{E}}
\newcommand{\R}{\mathbb{R}}
\newcommand{\anos}{a$\tilde{\text{n}}$os\ }
\newcommand{\ene}{$\tilde{\text{n}}$}
\begin{document}

\title{Modelaci\'on de Poblaciones V\'ia\\ Cadenas de Markov Tridimensionales\\ (Demographic Modeling Via 3-dimensional Markov Chains)}
%
%
%
\author{Juan Jos\'e V\'iquez\footnote{Profesor/Investigador en el Departamento de Matem\'aticas de la Universidad de Costa Rica, email: viquezejin@gmail.com.}, Jorge Aurelio V\'iquez\footnote{Graduado del Departamento de Matem\'aticas de la Universidad de Costa Rica, email: javiquez42@gmail.com.}, Alexander Campos\footnote{Graduado del Departamento de Matem\'aticas de la Universidad de Costa Rica, email: alexander.camp353@gmail.com.}, Jorge Lor\'ia\footnote{Graduado del Departamento de Matem\'aticas de la Universidad de Costa Rica, email: jelorias95@gmail.com.}, Luis Alfredo Mendoza\footnote{Graduado del Departamento de Matem\'aticas de la Universidad de Costa Rica, email: luis.mf08@gmail.com.}.}

%
%
%
%
%
\maketitle

\selectlanguage{english}
\begin{abstract}
This article presents a new model for demographic simulation which can be used to forecast and estimate the number of people in pension funds (contributors and retirees) as well as workers in a public institution. Furthermore, the model introduces opportunities to quantify the financial flows coming from future populations such as salaries, contributions, salary supplements, employer contribution to savings/pensions, among others. The implementation of this probabilistic model will be of great value in the actuarial toolbox, increasing the reliability of the estimations as well as allowing deeper demographic and financial analysis given the reach of the model. We introduce the mathematical model, its first moments, and how to adjust the required probabilities, showing at the end an example where the model was applied to a public institution with real data.
\end{abstract}
\keyw{Markov Chains, Demographic Simulation, Financial Engineering}

\selectlanguage{spanish}

\begin{abstract}
En este art\'iculo se presenta un nuevo modelo de generaci\'on poblacional que puede ser utilizado para proyectar tanto personas en fondos de pensiones (tanto cotizantes como jubilados) como trabajadores en instituciones p\'ublicas. Aunado a esto, el modelo presenta oportunidades para cuantificar los flujos derivados de estas poblaciones futuras, tales como gastos en salarios, cotizaciones, pluses salariales, aportes patronales a ahorros/pensiones, entre otros. Claramente la implementaci\'on de este modelo probabil\'istico ser\'a de gran utilidad dentro de la caja de herramientas actuariales, aumentando la confiabilidad de las proyecciones, as\'i como permitiendo an\'alisis m\'as profundos por cuanto el desgloce poblacional y financiero del modelo es extenso. Aqu\'i presentamos el modelo matem\'atico, sus primeros momentos, y el ajuste de las probabilidades que lo alimenta, finalizando con un ejemplo de aplicaci\'on a una instituci\'on p\'ublica con datos reales.
\end{abstract}
\keyclave{Cadenas de Markov, Generaci\'on Demogr\'afica, Matem\'atica Financiera}

\section{Introducci\'on}

A menudo en el mundo actuarial, se presenta la imperiosa necesidad de contar con proyecciones poblacionales. Para reg\'imenes de pensiones del tipo ``Pay-As-You-Go'', es necesario pronosticar el n\'umero de cotizantes que alimenten los ingresos del fondo, as\'i como la cantidad de pensionados (quienes representan los gastos). Del mismo modo, en una instituci\'on p\'ublica, es de gran relevancia conocer la din\'amica de la movilida laboral, analizando el crecimiento/decrecimiento de trabajadores en ciertos puestos, salidas por jubilaciones y relevos generacionales, al tiempo que se cuantifican los gastos derivados de estos trabajadores y sus pluses salariales.

\bigskip

El siguiente modelo se sustenta en considerar una tripleta markoviana, es decir, un vector en el espacio de estados $\mathcal{E}\subset \R^3$, tal que para todo conjunto $\{w_i\ /\ i=1,\dots,n\}\subset\mathcal{E}$ se debe cumplir
$$\Prob\left[\boldsymbol{W}_n = w_n\ |\ \boldsymbol{W}_1 = w_1,\dots,\boldsymbol{W}_{n-1} = w_{n-1}\right]=\Prob\left[\boldsymbol{W}_n = w_n\ |\ \boldsymbol{W}_{n-1} = w_{n-1}\right].$$
Se descompone el espacio $\mathcal{E}$ en tres componentes importantes para la modelaci\'on: un espacio de ``Categor\'ia'', uno de ``Edad'' y otro de ``Antig$\ddot{\text{u}}$edad''. La idea detr\'as del mismo consiste en considerar que esta tripleta determina el comportamiento del individuo, y los estados a los que ``salta''.

\bigskip

Por ejemplo, considere la situaci\'on de modelar el comportamiento de una universidad, espec\'ificamente para la categor\'ia acad\'emica de catedr\'atico. Personas en esta categor\'ia pueden tener distintas edades, donde claramente un catedr\'atico con 35 a\~nos va a acceder a pluses salariales distintos que a los que accede un catedr\'atico de 50 a\~nos, el cual puede ser incluso Rector de la universidad. A\'un m\'as, suponiendo que estamos observando a un catedr\'atico de 35 a\~nos, existen distintas formas que esto suceda. Por ejemplo, el trabajador podr\'ia haber logrado el estatus de catedr\'atico en otra entidad y llevar solo un a\~no trabajando en esta universidad, o podr\'ia llevar 15 a\~nos ``asociado'' a esta instituci\'on. Se esperar\'ia que el primero presente un comportamiento muy distinto al segundo, el cual ha construido una carrera profesional dentro de la instituci\'on y le ser\'ia m\'as dif\'icil salirse.

\bigskip

El art\'iculo est\'a dividido de la siguiente manera: en la segunda secci\'on se presentan unos resultados b\'asicos de cadenas de Markov que ser\'an de utilidad en las siguientes partes. A lo largo de la tercera secci\'on se presenta la definici\'on probabil\'istica del modelo, sus estados y sus propiedades, presentando resultados sobre poblaciones esperadas y c\'omo calcularlas. Para la cuarta secci\'on se tratar\'a el tema del ajuste estad\'istico del modelo, apalancandose en datos mensuales para computar los estimadores de las probabilidades de transici\'on, de ingreso al sistema, y de la distribuci\'on inicial. La quinta secci\'on se concentra en presentar un algoritmo eficiente de generaci\'on poblacional, utilizando el hecho de que se puede considerar cada grupo poblacional como una realizaci\'on de una multinomial con par\'ametros determinados por el modelo. Finalmente, en la sexta secci\'on se presentar\'a la implementaci\'on del modelo con datos reales, y se mostrar\'a el nivel de ajuste al contrastarlo con datos observados, utilizando el conocido m\'etodo del ``backtesting''.

\section{Preliminares}

Antes de iniciar con la presentaci\'on del modelo de generaci\'on poblacional, es necesario establecer unos resultados conocidos\footnote{Ver \cite{Hoel} para adentrarse m\'as en el tema.}, que ser\'an herramientas \'utiles para efectuar los c\'alculos necesarios del modelo.

\begin{lemma}\label{probcondicional1}
Si $D_i$ son disjuntos y $\Prob[C\ |\ D_i]=p$, independientemente de $i$, entonces $\Prob[C\ |\ \cup_iD_i]=p$.
\end{lemma}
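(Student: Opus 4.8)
The plan is to reduce everything to the definition of conditional probability and then exploit the disjointness of the $D_i$ together with countable additivity. First I would rewrite the hypothesis: by definition $\Prob[C \mid D_i] = \Prob[C \cap D_i]/\Prob[D_i] = p$, which is equivalent to the identity $\Prob[C \cap D_i] = p\,\Prob[D_i]$ holding for every $i$ (implicitly assuming $\Prob[D_i] > 0$, so that each conditional probability is well defined).

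Next I would expand the target quantity as $\Prob[C \mid \cup_i D_i] = \Prob[C \cap (\cup_i D_i)]/\Prob[\cup_i D_i]$. The key observation is that since the $D_i$ are pairwise disjoint, so are the sets $C \cap D_i$; hence countable additivity yields $\Prob[C \cap (\cup_i D_i)] = \Prob[\cup_i (C \cap D_i)] = \sum_i \Prob[C \cap D_i]$, and likewise $\Prob[\cup_i D_i] = \sum_i \Prob[D_i]$ for the denominator.

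Finally I would substitute the rewritten hypothesis into the numerator and factor the constant out of the sum: $\sum_i \Prob[C \cap D_i] = \sum_i p\,\Prob[D_i] = p \sum_i \Prob[D_i]$. Dividing by $\sum_i \Prob[D_i] = \Prob[\cup_i D_i]$ then cancels the common sum and leaves exactly $p$, which is the claim.

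The argument is essentially bookkeeping, so I do not expect a deep obstacle; the only points that demand a little care are the standing nondegeneracy assumptions (each $\Prob[D_i] > 0$ for the hypotheses to make sense, and $\Prob[\cup_i D_i] > 0$ for the conclusion to be meaningful) and the propagation of disjointness from the $D_i$ to the intersections $C \cap D_i$, since it is precisely this that legitimizes the use of additivity. In the intended application the index set is finite, so no convergence issues arise and the sums above are genuinely finite.
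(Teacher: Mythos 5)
Your proposal is correct and follows essentially the same argument as the paper: expand $\Prob[C\mid \cup_i D_i]$ by the definition of conditional probability, use disjointness and additivity to write the numerator as $\sum_i \Prob[C\cap D_i]=p\sum_i \Prob[D_i]$, and cancel against the denominator. The only cosmetic difference is that you apply additivity to the denominator at the outset while the paper recognizes $\sum_i\Prob[D_i]=\Prob[\cup_i D_i]$ at the final step; your explicit remark on the nondegeneracy assumptions $\Prob[D_i]>0$ and $\Prob[\cup_i D_i]>0$ is a welcome precision the paper leaves implicit.
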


\begin{proof}
$$\Prob[C\ |\ \cup_iD_i]=\frac{\Prob\left[C\cap \cup_iD_i\right]}{\Prob[\cup_iD_i]}=\frac{\sum_i \Prob[C\cap D_i]}{\Prob[\cup_kD_k]}=\frac{1}{\Prob[\cup_kD_k]}\sum_i \overbrace{\Prob[C\ |\ D_i]}^{=\ p}\Prob[D_i]=p\frac{1}{\Prob[\cup_kD_k]}\overbrace{\sum_i \Prob[D_i]}^{=\ \Prob[\cup_iD_i]}=p.$$
\end{proof}

\begin{lemma}\label{probcondicional2}
Si $C_i$ son disjuntos, entonces $\Prob[\cup_iC_i\ |\ D]=\sum_i\Prob[C_i\ |\ D]$.
\end{lemma}

\begin{proof}
Por aditividad contable de $\Prob$ se tiene que
$$\Prob[\cup_iC_i\ |\ D]=\frac{\Prob[(\cup_iC_i)\cap D]}{\Prob[D]}=\frac{\Prob[\cup_i(C_i\cap D)]}{\Prob[D]}=\sum_i\frac{\Prob[C_i\cap D]}{\Prob[D]}=\sum_i\Prob[C_i\ |\ D].$$
\end{proof}

\begin{lemma}\label{probcondicional3}
Si $E_i$ son disjuntos y $\cup_iE_i=\Omega$, entonces $\Prob[C\ |\ D]=\sum_i\Prob[E_i\ |\ D]\Prob[C\ |\ E_i\cap D]$.
\end{lemma}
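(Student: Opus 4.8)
The plan is to reduce the identity to the law of total probability relative to the conditioning event $D$, using the partition $\{E_i\}$ together with the two previous lemmas. First I would exploit that $\{E_i\}$ covers $\Omega$ in order to rewrite $C$ as a disjoint union: since $\cup_i E_i = \Omega$ and the $E_i$ are pairwise disjoint, the sets $C \cap E_i$ are themselves pairwise disjoint and $C = C \cap \Omega = \cup_i (C \cap E_i)$. This is the only place where the covering and disjointness hypotheses on the $E_i$ actually enter.

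Next I would apply Lemma \ref{probcondicional2} to the disjoint family $\{C \cap E_i\}$, conditioning on $D$, to obtain
$$\Prob[C\ |\ D] = \Prob[\cup_i (C \cap E_i)\ |\ D] = \sum_i \Prob[C \cap E_i\ |\ D].$$
The remaining step is purely the definition of conditional probability applied termwise. For each $i$ I would expand, via the chain rule for conditional probabilities,
$$\Prob[C \cap E_i\ |\ D] = \frac{\Prob[C \cap E_i \cap D]}{\Prob[D]} = \frac{\Prob[E_i \cap D]}{\Prob[D]} \cdot \frac{\Prob[C \cap (E_i \cap D)]}{\Prob[E_i \cap D]} = \Prob[E_i\ |\ D]\,\Prob[C\ |\ E_i \cap D],$$
and summing over $i$ would then yield the claim.

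The only genuine obstacle is the degenerate case in which $\Prob[E_i \cap D] = 0$ for some index $i$, since the middle factorization above divides by $\Prob[E_i \cap D]$ and is therefore invalid for such $i$. These indices must be treated separately, but they cause no real difficulty: if $\Prob[E_i \cap D] = 0$ then $\Prob[C \cap E_i \cap D] \le \Prob[E_i \cap D] = 0$, so the term $\Prob[C \cap E_i\ |\ D]$ already vanishes and contributes nothing to the sum. Restricting the summation to the indices with $\Prob[E_i \cap D] > 0$ — where the factorization is legitimate — thus reproduces the identity without any loss, which is the convention under which the statement is to be read.
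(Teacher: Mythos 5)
Your proof is correct and takes essentially the same route as the paper's: both rest on the chain-rule factorization $\Prob[E_i\ |\ D]\,\Prob[C\ |\ E_i\cap D]=\Prob[C\cap E_i\ |\ D]$ combined with Lemma \ref{probcondicional2} applied to the disjoint family $\{C\cap E_i\}$, with you merely running the chain of equalities in the opposite direction (from $\Prob[C\ |\ D]$ toward the sum rather than from the sum back to $\Prob[C\ |\ D]$). Your explicit treatment of the degenerate indices with $\Prob[E_i\cap D]=0$ is a small gain in rigor over the paper, which divides by $\Prob[E_i\cap D]$ without comment, but it does not change the substance of the argument.
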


\begin{proof}
Note que
$$\Prob[E_i\ |\ D]\Prob[C\ |\ E_i\cap D]=\frac{\Prob[E_i\cap D]}{\Prob[D]}\cdot\frac{\Prob[C\cap E_i\cap D]}{\Prob[E_i\cap D]}=\Prob[C\cap E_i\ |\ D].$$
Como los $C\cap E_i$ son disjuntos, entonces por el Lema \ref{probcondicional2},
$$\sum_i\Prob[E_i\ |\ D]\Prob[C\ |\ E_i\cap D]=\sum_i\Prob[C\cap E_i\ |\ D]=\Prob[\cup_i(C\cap E_i)\ |\ D]=\Prob[C\cap (\cup_iE_i)\ |\ D]=\Prob[C\ |\ D].$$
\end{proof}

\section{Modelo Probabil\'istico}

En las siguientes secciones se definir\'an los estados que componen al espacio $\mathcal{E}$, se efectuar\'an los c\'alculos sobre la cadena una vez implementados ciertos supuestos, y se concluir\'a con el ajuste estad\'istico de los par\'ametros del modelo. Adem\'as, se crear\'an particiones en $N-$tuplas de cada tripleta, agregando caracter\'isticas poblacionales importantes dentro de la modelaci\'on.

\subsection{Definici\'on de los Estados}

Se considera un espacio de estados $\mathcal{E}=C\times E\times A$, compuesto por un espacio de ``Categor\'ia'' $C$, uno de ``Edad'' $E$ y otro de ``Antig$\ddot{\text{u}}$edad'' $A$, con estados determinados por la forma $(c,e,a)\in C\times E\times A$, siendo $c$ la categor\'ia a la que pertenece, $e$ la edad que posee, y $a$ la antig\"uedad que tiene asignada.

\begin{itemize}

\item {\bf Categor\'ia:} Asumimos que existen $N_C+1$ categor\'ias, es decir, $C=\bigl\{C^{(0)},C^{(1)},\dots,C^{(N_C)}\bigr\}$. Aqu\'i cada $C^{(i)}$ representa alg\'un tipo de indicador sobre el estatus del individuo, tales como categor\'ia salarial en caso de trabajador p\'ublico, o tipo de sector y g\'enero (hombre/mujer - independiente/p\'ublico/privado). Por otra parte, $C^{(0)}$ es la categor\'ia que representa estar ``fuera'' del sistema, es decir, son las personas que no pertenecen a la organización, y que no le están generando ningún tipo de gastos o ingresos (directamente), pero que con probabilidad positiva pueden llegar a hacerlo en los siguientes años. No se asume ninguna estructura sobre estos, únicamente se busca la probabilidad de que entren a cada categoría. Por ejemplo, en una instituci\'on p\'ublica se considerar\'ia como ``no estar contratado'' por dicha entidad, o como personas no cotizantes si fuera un fondo de pensiones.

\bigskip

\item {\bf Edad:} Se toman edades enteras (aunque se pueden desagregar m\'as) como un rango $E=[E_l,E_u)\cap\mathbb{Z}$, donde $E_l$ es la edad m\'as peque\~na (posiblemente negativa) y $E_u$ la edad m\'axima. Se consideran $N_E+1$ grupos de edad, es decir, $E=\bigcup_{i=0}^{N_E} E^{(i)}$. La idea es buscar grupos de edad que presenten similares comportamientos de transici\'on entre categor\'ias. El grupo de edad $E^{(0)}$ representa las edades ``de reserva'', es decir, aquellas personas que con el paso de los a\~nos vendr\'an a alimentar el modelo. Por ejemplo, si estamos en una instituci\'on p\'ublica, la cual contrata solamente a personas mayores de 18 a\~nos, y si se va a proyectar la poblaci\'on por 25 a\~nos, entonces este grupo de reserva ser\'ia 
$$E^{(0)}=[-7,18)\cap\mathbb{Z}.$$

\item {\bf Antig$\ddot{\text{u}}$edad:} Del mismo modo, se consideran solamente antig\"uedades enteras $A=[0,A_u)\cap\mathbb{Z}$, las cuales representan los a\~nos de ``ligamen'' del individuo con el ``sistema''.\footnote{Se puede desagregar más, pero no se consideran antig\"uedades así en este art\'iculo debido a que se complica mucho la notación (ya suficientemente compleja).} Tomamos $N_A$ grupos de antig$\ddot{\text{u}}$edades, con $A=\bigcup_{i=1}^{N_A} A^{(i)}$. En un fondo de pensiones, los grupos de antig$\ddot{\text{u}}$edades ser\'ian ``paquetes'' de a\~nos cotizados, indicando su proximidad/lejan\'ia con el estado de la categor\'ia``pensi\'on''. Por ejemplo, $A^{(1)}$ podr\'ia estar compuesto por las personas que tienen entre 1 a 10 a\~nos de cotizar. En una instituci\'on p\'ublica, se podr\'ia tomar a las antig\"uedades como a\~nos laborados en dicha instituci\'on, indicando la ``carrera'' profesional que la persona haya construido en dicha entidad.

\end{itemize}

\smallskip

Sea $S_i$ la $i-$\'esima caracter\'istica, con $S_i:=\left\{S_i^{(1)},S_i^{(2)},\dots,S_i^{(N_i)}\right\}$, donde $S_i^{(j)}$ es el $j-$\'esimo estado de la $i-$\'esima caracter\'istica. Tomamos el vector $S=(S_1,S_2,\dots,S_{N_S})$ de dichas caracter\'isticas, el cual representa aquellos factores que inciden en los c\'alculos, financieros o demogr\'aficos, de la poblaci\'on a la cual se le est\'a aplicando el modelo. Se denota por $I_n^{C^{(r)},E^{(i)},A^{(k)}}\bigl(S^{j_1}_1,\dots,S^{j_{N_S}}_{N_S}\bigr)$ al n\'umero de personas que en el a\~no $n$ poseen una tripleta $(c,e,a)\in \left\{C^{(r)}\right\}\times E^{(i)}\times A^{(k)}\subset\mathcal{E}$, y $N_S-$tupla de caracter\'isticas $\bigl(S^{j_1}_1,\dots,S^{j_{N_S}}_{N_S}\bigr)$. 

\bigskip

\noindent {\bf Ejemplo:} Tomemos el caso de un profesor de la Universidad de Costa Rica, para el cual se tienen los rubros salariales usuales, (se incluyen los montos de las garant\'ias sociales). El c\'alculo del salario de cada uno de los grupos de la poblaci\'on se har\'ia de acuerdo a los valores de $S^{j_i}_i$, correspondientes a los siguientes componentes:

\begin{table}[htbp]
\begin{center}
\begin{tabular}{lclcr}\hline
$S^{j_1}_i$ & & Descripci\'on  & & Monto \\\hline
$S^{j_1}_1$ & = &	Salario Base Docente 	&  &	644 831	\\
$S^{j_2}_2$	& = &	Porcentaje Categor\'ia Acad\'emica	& &	354 657	\\
$S^{j_3}_3$	& = &	Anualidad	&  &	776 190	\\
$S^{j_4}_4$	& = &	Escalaf\'on Docente 	&  &	119 939	\\
$S^{j_5}_5$	& = &	Fondo Consolidado 	&  &	18 854	\\
$S^{j_6}_6$	& = &	Pasos Acad\'emicos	&  &	59 970	\\
$S^{j_7}_7$	& = &	Reconocimiento por Elecci\'on 	&  &	279 857	\\
$S^{j_8}_8$	& = &	Magisterio	&  &	176 822	\\
$S^{j_9}_9$	& = &	Seguro de Enfermedad y Maternidad	&  &	225 600	\\
$S^{j_{10}}_{10}$	& = &	Banco Popular	&  &	12 195	\\
$S^{j_{11}}_{11}$	& = &	Fondo de Capitalizaci\'on Laboral	&  &	73 168	\\
$S^{j_{12}}_{12}$	& = &	Fondo de Pensi\'on Complementaria	&  &	36 584	\\
$S^{j_{13}}_{13}$	& = &	Aguinaldo	&  &	203 235	\\
$S^{j_{14}}_{14}$	& = &	Salario Escolar	&  &	184 627	\\
$S^{j_{15}}_{15}$	& = &	JAFAP	& &	60 973	\\\hline
  &  &	TOTAL GASTADO	& &	3 227 500\\\hline
\end{tabular}
\end{center}
\vspace{-.5cm}
\caption{Caracter\'isticas Salariales}
\label{CaracteristicasSalariales}
\end{table}

\smallskip

Como se observa de los datos, el costo total para la Universidad por este profesor es de $3.227.500$ colones. Asuma ahora que para $n=2$,
$$I_2^{C^{(r)},E^{(i)},A^{(k)}}\bigl(S^{j_1}_1,\dots,S^{j_{15}}_{15}\bigr)=30,$$
es decir, que dentro de 2 a\~nos habr\'ian 30 individuos en la categor\'ia $C^{(r)}$, con rango de edad $E^{(i)}$, rango de antig\"{u}edad $A^{(k)}$, y con las caracter\'isticas salariales del Cuadro
\ref{CaracteristicasSalariales}. Entonces, el gasto total para la Universidad por este grupo de personas ser\'a de $96.824.998$ colones. Repitiendo este proceso con todas las categor\'ias, rangos de edad y de antig\"uedad, y todas las caracter\'isticas salariales, se logra obtener el monto total gastado en todos los empleados de la Universidad de Costa Rica.\footnote{El salario base se deber\'ia incrementar (semestralmente) de acuerdo con la inflaci\'on estimada, para cada a\~no de proyecci\'on.}

\subsection{Cadena de Markov}

Considere, $(X_n,Y_n,Z_n)$ la tripleta aleatoria de la cadena, donde $X_n$, $Y_n$ y $Z_n$ representan el estado ``Categor\'ia'', ``Edad'' y ``Antig$\ddot{\text{u}}$edad'', respectivamente, en el $n-$\'esimo a$\tilde{\text{n}}$o. Se asume que se cumple la propiedad de Markov, es decir, para $(c_k,e_k,a_k)\in C\times E\times A$, $k=0,\dots,n$,
\begin{align*}
\Prob\bigl[(X_n,Y_n,Z_n)=\left(c_n,e_n,a_n\right)\ \bigr|&\ (X_{n-1},Y_{n-1},Z_{n-1})=\left(c_{n-1},e_{n-1},a_{n-1}\right),\dots, (X_0,Y_0,Z_0)=\left(c_0,e_0,a_0\right)\bigr]\\
&=\Prob\bigl[(X_n,Y_n,Z_n)=\left(c_n,e_n,a_n\right)\ \bigr|\ (X_{n-1},Y_{n-1},Z_{n-1})=\left(c_{n-1},e_{n-1},a_{n-1}\right)\bigr].
\end{align*}

\noindent Como se asume que la cadena es homog\'enea, se tiene que
\begin{align*}
\Prob\bigl[(X_n,Y_n,Z_n)=\left(c_n,e_n,a_n\right)\ \bigr|&\ (X_{n-1},Y_{n-1},Z_{n-1})=\left(c_{n-1},e_{n-1},a_{n-1}\right)\bigr]\\
&=\Prob\bigl[(X_1,Y_1,Z_1)=\left(c_1,e_1,a_1\right)\ \bigr|\ (X_0,Y_0,Z_0)=\left(c_0,e_0,a_0\right)\bigr],
\end{align*}
para todo $n$. Se nota adem\'as que
\begin{align*}
\Prob\bigl[(X_1,Y_1,Z_1)=\left(c_1,e_1,a_1\right)\ \bigr|&\ (X_0,Y_0,Z_0)=\left(c_0,e_0,a_0\right)\bigr]\\
&=\Prob\bigl[X_1=c_1\ |\ Y_1=e_1,Z_1=a_1,X_0=c_0,Y_0=e_0,Z_0=a_0\bigr]\\
&\quad\quad\quad \times\Prob\bigl[Z_1=a_1\ |\ Y_1=e_1,X_0=c_0,Y_0=e_0,Z_0=a_0\bigr]\\
&\quad\quad\quad\quad\quad \times\Prob\bigl[Y_1=e_1\ |\ X_0=c_0,Y_0=e_0,Z_0=a_0\bigr]
\end{align*}

\smallskip

{\bf Observaciones, hip\'otesis y c\'alculos:}
\begin{itemize}

\item Asumimos que el incremento de la antig$\ddot{\text{u}}$edad es homog\'enea (identicamente distribuida); eso es, existe una variable aleatoria $\xi$ con valores en $\left\{0,1\right\}$ (que simboliza el haber pertenecido al sistema (``trabajado-cotizado'') durante un a$\tilde{\text{n}}$o o no haber estado en el sistema en dicho a$\tilde{\text{n}}$o), tal que $Z_1-Z_0\overset{d}{=}\xi$, para todo $n$.
\begin{align*}
\Prob\bigl[X_1=c_1\ |&\ Y_1=e_1,Z_1=a_1,X_0=c_0,Y_0=e_0,Z_0=a_0\bigr]\\
&=\Prob\bigl[X_1=c_1\ |\ Y_1=e_1, \xi=a_1-a_0,X_0=c_0,Y_0=e_0,Z_0=a_0\bigr]
\end{align*}
y
\begin{align*}
\Prob\bigl[Z_1=a_1\ |\ Y_1=e_1,X_0=c_0,&Y_0=e_0,Z_0=a_0\bigr]\\
&=\Prob\bigl[\xi=a_1-a_0\ |\ Y_1=e_1,X_0=c_0,Y_0=e_0,Z_0=a_0\bigr]
\end{align*}

\item Se asume que los aumentos de edad suceden en enero de cada a\ene o. N\'otese que si $e_1\neq e_0+1$, entonces $\{Y_1=e_1\}\cap\{Y_0=e_0\}=\emptyset$. Como $\Prob\bigl[Y_1=e_1\ |\ X_0=c_0,Y_0=e_0,Z_0=a_0\bigr]=0$, en este caso no importar\'ia si no se condiciona por $\{Y_1=e_1\}$. Igualmente, $\{Y_1=e_1\}\cap\{Y_0=e_0\}=\{Y_0=e_0\}$ si $e_1= e_0+1$. Y en el segundo caso, condicionar por $\{Y_1=e_1\}\cap\{Y_0=e_0\}$ es lo mismo que condicionar solamente por $\{Y_0=e_0\}$. En conclusi\'on, se tiene que
\begin{align*}
\Prob\bigl[X_1=c_1\ |&\ Y_1=e_1, \xi=a_1-a_0,X_0=c_0,Y_0=e_0,Z_0=a_0\bigr]\\
&=\Prob\bigl[X_1=c_1\ |\ \xi=a_1-a_0,X_0=c_0,Y_0=e_0,Z_0=a_0\bigr],
\end{align*}
y
\begin{align*}
\Prob\bigl[\xi=a_1-a_0\ |&\ Y_1=e_1,X_0=c_0,Y_0=e_0,Z_0=a_0\bigr]\\
&=\Prob\bigl[\xi=a_1-a_0\ |\ X_0=c_0,Y_0=e_0,Z_0=a_0\bigr],
\end{align*}
siempre y cuando se multiplique el factor $\Prob\bigl[Y_1=e_1\ |\ X_0=c_0,Y_0=e_0,Z_0=a_0\bigr]$. A\'un m\'as,
\begin{align*}
\Prob\bigl[Y_1=e_1\ |\ X_0=c_0,Y_0=e_0,Z_0=a_0\bigr]=\begin{cases}
1 & \text{ si } e_1=e_0+1\\
0 & \text{ si } e_1\neq e_0+1
\end{cases}.
\end{align*}

\item Note que la probabilidad de estar en la categor\'ia $c_1\neq C^{(0)}$ dado que no aument\'o su antig\"uedad ($\xi=0$) es 0 (no puede cambiar de categor\'ia dentro del sistema si no est\'a en el sistema). Del mismo modo, la probabilidad de irse a la categor\'ia $C^{(0)}$ si no permanece en el sistema es 1. Inversamente, si entra al sistema ($\xi=1$), entonces la probabilidad de pasar por la categor\'ia $C^{(0)}$ ser\'ia 0, y solo podr\'ia pasar por a las otras categor\'ias $\{C^{(1)},\dots,C^{(N_C)}\}$.\footnote{Recuerde si sale del sistema entonces no puede tener ninguna categor\'ia dentro del sistema, y si entra al sistema no podr\'ia tener la categoria que significa estar afuera del sistema.} De este modo tenemos,
\begin{align*}
\Prob\bigl[X_1=c_1\ |\ \xi=a_1-a_0,X_0=c_0,&Y_0=e_0,Z_0=a_0\bigr]\\
&=\begin{cases}
\Prob\bigl[X_1=c_1\ |\ X_0=c_0,Y_0=e_0,Z_0=a_0\bigr] & \text{ si } \xi = 1 \text{ y } c_1\neq C^{(0)},\\
0 & \text{ si } \xi = 1 \text{ y } c_1 = C^{(0)}\\
1 & \text{ si } \xi = 0 \text{ y } c_1 = C^{(0)}\\
0 & \text{ si } \xi = 0 \text{ y } c_1 \neq C^{(0)}
\end{cases}
\end{align*}

\item Se asume que la probabilidad de pasar de la categor\'ia $c_0$ a la categor\'ia $c_1$, dado que fue contratado o no ($\xi$), $a_0\in A_0\in\left\{A^{(1)},A^{(2)},A^{(3)},A^{(4)}\right\}$, y $e_0\in E_0\in\left\{E^{(0)},E^{(1)},E^{(2)},E^{(3)},E^{(4)}\right\}$, es la misma. Es decir, que la probabilidad de cambiar de categor\'ia solo se ve afectada cuando se pasa de grupos de edad y antig$\ddot{\text{u}}$edad. De la misma manera, $\Prob\bigl[\xi=\cdot\ |\ X_0=c_0,Y_0=e_0,Z_0=a_0\bigr]$ no cambia para todo $a_0\in A_0\in\left\{A^{(1)},A^{(2)},A^{(3)},A^{(4)}\right\}$, y $e_0\in E_0\in\left\{E^{(0)},E^{(1)},E^{(2)},E^{(3)},E^{(4)}\right\}$. Por el Lema \ref{probcondicional1}, se concluye que
$$\hspace{-.2cm}\Prob\bigl[X_1=c_1\ |\ \xi=a_1-a_0, X_0=c_0,Y_0=e_0,Z_0=a_0\bigr]=\Prob\left[X_1=c_1\ \left|\ \xi=a_1-a_0, X_0=c_0,Y_0\in E_0,Z_0\in A_0\right.\right].$$
y
$$\Prob\bigl[\xi=a_1-a_0\ |\ X_0=c_0,Y_0=e_0,Z_0=a_0\bigr]=\Prob\left[\xi=a_1-a_0\ \left|\ X_0=c_0,Y_0\in E_0,Z_0\in A_0\right.\right]$$

\end{itemize}

Denote
$$P^{E_0,A_0}\bigl(c_0,c_1\bigr):=\Prob\left[X_1=c_1\ \left|\ X_0=c_0,Y_0\in E_0,Z_0\in A_0\right.\right],$$
la probabilidad de transici\'on entre las categor\'ias $\{C^{(1)},\dots,C^{(N_C)}\}$, dado los grupos de edad $E_0$ y de antig$\ddot{\text{u}}$edad $A_0$. Igualmente, denote
$$Q^{E_0,A_0,c_0}(\cdot):=\Prob\left[\xi=\cdot\ \left|\ X_0=c_0,Y_0\in E_0,Z_0\in A_0\right.\right],$$
la probabilidad de que una persona en el grupo de edad $E_0$, con antig$\ddot{\text{u}}$edad en el rango $A_0$, dentro de la categor\'ia $c_0$, sea inclu\'ido o no al sistema en el a\~no siguiente. N\'otese que $Q^{E_0,A_0,c_0}(r)=0$ para todo $r\notin\left\{0,1\right\}$.

\bigskip

Tomando en cuenta todo lo anterior, se concluye que si $e_0\in E_0$ y $a_0\in A_0$, entonces
\begin{align}\label{probtrans}
\hspace{-.5cm}\Prob\bigl[(X_1,Y_1,Z_1)=\left(c_1,e_1,a_1\right)\ \bigr|&\ (X_0,Y_0,Z_0)=\left(c_0,e_0,a_0\right)\bigr]\nonumber\\
&=\begin{cases}
P^{E_0,A_0}\bigl(c_0,c_1\bigr)\cdot Q^{E_0,A_0,c_0}(1)\cdot\boldsymbol{1}_{\{e_1=e_0+1\}} & \text{ si } a_1-a_0 = 1 \text{ y } c_1\neq C^{(0)},\\
0 & \text{ si } a_1-a_0 = 1 \text{ y } c_1 = C^{(0)}\\
Q^{E_0,A_0,c_0}(0)\cdot\boldsymbol{1}_{\{e_1=e_0+1\}} & \text{ si } a_1-a_0 = 0 \text{ y } c_1 = C^{(0)}\\
0 & \text{ si } a_1-a_0 = 0 \text{ y } c_1 \neq C^{(0)},
\end{cases}
\end{align}

\subsubsection{Transiciones Mensuales}

Aqu\'i se est\'a abusando del lenguaje, pues un a\~no consiste de 12 meses, y en cada mes se podr\'ia observar una categor\'ia diferente. En este sentido, es necesario considerar el sentido que tiene la frase ``en el $n-$\'esimo a\~no la persona tuvo la tripleta $(c_n,e_n,a_n)$''. Una alternativa es considerar que $(X_n,Y_n,Z_n)=(c_n,e_n,a_n)$ representa que al inicio del a\~no $n$, la persona se encontraba en esa tripleta, y asumir que no existen cambios de categor\'ia, edad y antig\"uedad a lo largo del a\~no. Otra opci\'on ser\'ia pensar que $(X_n,Y_n,Z_n)=(c_n,e_n,a_n)$ significa que la categor\'ia fue $c_n$, la edad $e_n$ y la antig$\ddot{\text{u}}$edad $a_n$, en al menos un mes del $n-$\'esimo a$\tilde{\text{n}}$o. 

\begin{itemize}

\item Asuma que la transici\'on entre las categor\'ias $\{C^{(1)},\dots,C^{(N_C)}\}$ (dado el grupo de edad y de antig$\ddot{\text{u}}$edad) dentro de un mismo a\~no es una cadena de Markov en escala ``mensual''. Es decir, sea $\widetilde{X}_m$ la variable que representa el estado de categor\'ia en que se encuentra en el $m-$\'esimo mes, y sea la probabilidad $\Prob^{E_0,A_0}[\cdot]=\Prob[\ \cdot\ |\ Y_0\in E_0,Z_0\in A_0]$, entonces,
$$\Prob^{E_0,A_0}\left[\widetilde{X}_m=\widetilde{c}_m\ \left|\ \widetilde{X}_{m-1}=\widetilde{c}_{m-1},\dots,\widetilde{X}_0=\widetilde{c}_0\right.\right]=\Prob^{E_0,A_0}\left[\widetilde{X}_m=\widetilde{c}_m\ \left|\ \widetilde{X}_{m-1}=\widetilde{c}_{m-1}\right.\right].$$

\item Como los aumentos de edad y antig\"uedad suceden en enero de cada a\ene o, de manera que la probabilidad de cambiar de categor\'ia mensualmente es la misma durante todo el a\ene o, i.e., la cadena $\widetilde{X}_m$ es homog\'enea para $1\leq m\leq 12$. Denote
$$\widetilde{P}^{E_0,A_0}\bigl(\widetilde{c}_0,\widetilde{c}_1\bigr):=\Prob\left[\widetilde{X}_1=\widetilde{c}_1\ \left|\ \widetilde{X}_0=\widetilde{c}_0,Y_0\in E_0,Z_0\in A_0\right.\right],$$
la probabilidad de transici\'on {\bf mensual} entre categor\'ias, dado los grupos de edad $E_0$ y antig$\ddot{\text{u}}$edad $A_0$.

\item Sea $T_i$ el tiempo de parada donde se alcanza por primera vez la categor\'ia $C^{(i)}$ dentro de un a\~no determinado, con funci\'on de masa de probabilidad $p^{(i)}_t=\Prob[T_i=t]$. Como la tripleta $(X_n,Y_n,Z_n)$ representa el estado de categor\'ia, edad y antig$\ddot{\text{u}}$edad en el $n-$\'esimo a\~no, y de \'estos solo la categor\'ia cambia dentro de un mismo a\~no, definimos la siguiente relaci\'on, con $c_1 = C^{(i)}$,
\begin{align*}
P^{E_0,A_0}\bigl(c_0,c_1\bigr)&:=\Mean^{T_i}\left[\Prob^{E_0,A_0}\left[\widetilde{X}_{T_i}=\widetilde{c}_1\ \left|\ \widetilde{X}_0=\widetilde{c}_0, T\right.\right]\right]\\
&=\sum_{t=1}^{12}\Prob^{E_0,A_0}\left[\widetilde{X}_t=\widetilde{c}_1\ \left|\ \widetilde{X}_0=\widetilde{c}_0\right.\right]\cdot p^{(i)}_t.
\end{align*}
A\'un m\'as, aplicando la propiedad de Markov junto con el Lema \ref{probcondicional3} (dado que $\cup_{\widetilde{c}_i\in C}\{\widetilde{X}_i=\widetilde{c}_i\}=\Omega$) iteradamente, con la convenci\'on de que $c_t=c_1$,
\begin{align}\label{probtransmens}
P^{E_0,A_0}\bigl(c_0,c_1\bigr)&=\sum_{t=1}^{12}\Prob^{E_0,A_0}\left[\widetilde{X}_t=\widetilde{c}_1\ \left|\ \widetilde{X}_0=\widetilde{c}_0\right.\right]\cdot p^{(i)}_t\nonumber\\
&=\sum_{t=1}^{12}\sum_{\widetilde{c}_1,\dots,\widetilde{c}_{t-1}\in C}\prod_{l=1}^{t}\Prob^{E_0,A_0}\left[\widetilde{X}_l=\widetilde{c}_l\ \left|\ \widetilde{X}_{l-1}=\widetilde{c}_{l-1}\right.\right]\cdot p^{(i)}_t\nonumber\\
&=\sum_{t=1}^{12}\sum_{\widetilde{c}_1,\dots,\widetilde{c}_{t-1}\in C}\prod_{l=1}^{t}\widetilde{P}^{E_0,A_0}\bigl(\widetilde{c}_l,\widetilde{c}_{l-1}\bigr)\cdot p^{(i)}_t.
\end{align}
Este t\'ermino se sustituir\'ia en $\eqref{probtrans}$, para calcular las probabilidades de transici\'on.

\end{itemize}

\noindent Este tipo de herramienta se torna \'util cuando se poseen pocos datos anuales, y se requieren utilizar datos mensuales para incrementar las observaciones y mejorar la convergencia de los estimadores de los par\'ametros del modelo.

\subsection{Distribuci\'on de $(X_n,Y_n,Z_n)$}

Se define la distribuci\'on inicial $\pi_{c_0,e_0,a_0}$, para una tripleta $(c_0,e_0,a_0)\in C\times E\times A$, como la probabilidad de que un individuo tenga categor\'ia $c_0$, edad $e_0$ y antig$\ddot{\text{u}}$edad $a_0$ en el a\~no inicial.

\bigskip

\begin{prop}\label{prop1}
La probabilidad de que una persona, despu\'es de un a$\tilde{\text{n}}$o, se encuentre en el estado $(c_1,e_1,a_1)\in C\times E\times A$, es decir, que al a$\tilde{\text{n}}$o siguiente un individuo tenga $e_1$ a$\tilde{\text{n}}$os de edad, con antig$\ddot{\text{u}}$edad $a_1$ y en la categor\'ia $c_1$, vendr\'ia dado por
\begin{align*}
\Prob\bigl[(X_1,Y_1,Z_1)=(c_1,e_1,a_1)\bigr]=\sum_{c_0\in C}\left(P^{E_{e_1-1},A_{a_1-\delta_{c_1}}}\bigl(c_0,c_1\bigr)\right)^{\delta_{c_1}}\cdot Q^{E_{e_1-1},A_{a_1-\delta_{c_1}},c_0}\bigl(\delta_{c_1}\bigr)\cdot\pi_{c_0,e_1-1,a_1-\delta_{c_1}}
\end{align*}
donde $E_e\in\bigl\{E^{(0)},E^{(1)},E^{(2)},E^{(3)},E^{(4)}\bigr\}$ y $A_a\in\bigl\{A^{(1)},A^{(2)},A^{(3)},A^{(4)}\bigr\}$ son tales que $e\in E_e$ y $a\in A_a$, y con
\begin{align*}
\delta_{c_1}=\begin{cases}
1 & \text{ si } c_1 \neq C^{(0)}\\
0 & \text{ si } c_1 = C^{(0)}
\end{cases}.
\end{align*}
\end{prop}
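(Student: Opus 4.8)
The plan is to compute $\Prob[(X_1,Y_1,Z_1)=(c_1,e_1,a_1)]$ by a one-step total-probability decomposition over the initial triple and then to substitute the explicit transition law \eqref{probtrans}. Concretely, I would first write
\begin{align*}
\Prob\bigl[(X_1,Y_1,Z_1)=(c_1,e_1,a_1)\bigr]=\sum_{(c_0,e_0,a_0)\in\mathcal{E}}\Prob\bigl[(X_1,Y_1,Z_1)=(c_1,e_1,a_1)\ \bigr|\ (X_0,Y_0,Z_0)=(c_0,e_0,a_0)\bigr]\cdot\pi_{c_0,e_0,a_0},
\end{align*}
using that the events $\{(X_0,Y_0,Z_0)=(c_0,e_0,a_0)\}$ partition $\Omega$ and that $\pi_{c_0,e_0,a_0}$ is, by definition, the initial mass at $(c_0,e_0,a_0)$. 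The task then reduces to identifying, for fixed target $(c_1,e_1,a_1)$, exactly which initial triples contribute a nonzero transition probability, and to evaluating that probability.

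Next I would exploit the indicator structure of \eqref{probtrans}. Every nonzero branch there carries the factor $\boldsymbol{1}_{\{e_1=e_0+1\}}$, so only $e_0=e_1-1$ survives; this collapses the sum over $e_0$ and, since $e_0=e_1-1\in E_{e_1-1}$, fixes the relevant age group to $E_{e_1-1}$. To handle the seniority and category I would split on whether $c_1=C^{(0)}$, i.e.\ on the value of $\delta_{c_1}$. If $c_1\neq C^{(0)}$ (so $\delta_{c_1}=1$), only the first branch of \eqref{probtrans} is nonzero, forcing $a_1-a_0=1$, i.e.\ $a_0=a_1-1=a_1-\delta_{c_1}\in A_{a_1-\delta_{c_1}}$, and the transition factor equals $P^{E_{e_1-1},A_{a_1-1}}(c_0,c_1)\cdot Q^{E_{e_1-1},A_{a_1-1},c_0}(1)$. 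If instead $c_1=C^{(0)}$ (so $\delta_{c_1}=0$), only the third branch is nonzero, forcing $a_1-a_0=0$, i.e.\ $a_0=a_1=a_1-\delta_{c_1}$, and the transition factor equals $Q^{E_{e_1-1},A_{a_1},c_0}(0)$.

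In both cases the seniority is pinned to $a_0=a_1-\delta_{c_1}$, which collapses the sum over $a_0$ as well, leaving only a sum over $c_0\in C$. The two branches are then merged using the exponent $\delta_{c_1}$: raising $P^{E_{e_1-1},A_{a_1-\delta_{c_1}}}(c_0,c_1)$ to the power $\delta_{c_1}$ turns it into $1$ exactly when $c_1=C^{(0)}$, while the $Q$-factor is uniformly $Q^{E_{e_1-1},A_{a_1-\delta_{c_1}},c_0}(\delta_{c_1})$ since its argument is precisely the common seniority increment $\delta_{c_1}$. Collecting the surviving terms yields the stated formula. I expect no serious obstacle here: the only point requiring care is checking that the sums over $e_0$ and $a_0$ genuinely degenerate to single values — which follows from the age-increment indicator and from $\xi\in\{0,1\}$ being determined by $c_1$ in \eqref{probtrans} — after which the $\delta_{c_1}$-encoding is merely a bookkeeping device that unifies the two cases into a single expression.
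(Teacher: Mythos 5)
Your proposal is correct and follows essentially the same route as the paper: a total-probability decomposition over the initial triple $(c_0,e_0,a_0)$, collapse of the sum over $e_0$ via the indicator $\boldsymbol{1}_{\{e_1=e_0+1\}}$ and of the sum over $a_0$ via the case structure of \eqref{probtrans}, followed by the case split on $c_1=C^{(0)}$ versus $c_1\neq C^{(0)}$ that the $\delta_{c_1}$-exponent then unifies. The only cosmetic difference is that the paper keeps the intermediate step $a_0\in\{a_1-1,a_1\}$ explicit before splitting cases, whereas you pin $a_0=a_1-\delta_{c_1}$ directly within each case; the content is identical.
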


\bigskip

\begin{proof}
Como $\bigcup_{c_0\in C,e_0\in E,a_0\in A}\bigl\{(X_0,Y_0,Z_0)=(c_0,e_0,a_0)\bigr\}=\Omega$, entonces
\begin{align*}
\Prob\bigl[&(X_1,Y_1,Z_1)=(c_1,e_1,a_1)\bigr]=\Prob\left[\bigl\{(X_1,Y_1,Z_1)=(c_1,e_1,a_1)\bigr\}\bigcap\bigcup_{c_0\in C,e_0\in E,a_0\in A}\bigl\{(X_0,Y_0,Z_0)=(c_0,e_0,a_0)\bigr\}\right]\\
&=\sum_{c_0\in C,e_0\in E,a_0\in A}\Prob\bigl[(X_1,Y_1,Z_1)=(c_1,e_1,a_1)\ \bigr|\ (X_0,Y_0,Z_0)=(c_0,e_0,a_0)\bigr]\cdot\Prob\bigl[(X_0,Y_0,Z_0)=(c_0,e_0,a_0)\bigr]\\
&=\sum_{c_0\in C,e_0\in E,a_0\in A}\Prob\bigl[(X_1,Y_1,Z_1)=(c_1,e_1,a_1)\ \bigr|\ (X_0,Y_0,Z_0)=(c_0,e_0,a_0)\bigr]\cdot\boldsymbol{1}_{\{e_1=e_0+1\}}\cdot\pi_{c_0,e_0,a_0}\\
&=\sum_{c_0\in C,a_0\in\left\{a_1-1,a_1\right\}}\Prob\bigl[(X_1,Y_1,Z_1)=(c_1,e_1,a_1)\ \bigr|\ (X_0,Y_0,Z_0)=(c_0,e_0,a_0)\bigr]\cdot\pi_{c_0,e_1-1,a_0}\\
&=\begin{cases}
\sum_{c_0\in C}Q^{E_{e_1-1},A_{a_1},c_0}(0)\cdot\pi_{c_0,e_1-1,a_1} & \text{ si } c_1=C^{(0)}\\
\\
\sum_{c_0\in C}P^{E_{e_1-1},A_{a_1-1}}\bigl(c_0,c_1\bigr)\cdot Q^{E_{e_1-1},A_{a_1-1},c_0}(1)\cdot\pi_{c_0,e_1-1,a_1-1} & \text{ si } c_1\neq C^{(0)}
\end{cases}
\end{align*}
\end{proof}

\begin{theorem}\label{distXYZn}
La probabilidad de que una persona se encuentre en el estado $(c_n,e_n,a_n)\in C\times E\times A$ en el a$\tilde{\text{n}}$o $n$, es decir, que en $n$ a$\tilde{\text{n}}$os un individuo tenga $e_n$ a$\tilde{\text{n}}$os de edad, con antig$\ddot{\text{u}}$edad $a_n$ y en la categor\'ia $c_n$, vendr\'ia dado por
\begin{align*}
\Prob&\bigl[(X_n,Y_n,Z_n)=(c_n,e_n,a_n)\bigr]\\
&=\sum_{c_0,c_1,\dots,c_{n-1}\in C}\prod_{r=1}^n\left(P^{E_{e_n-r},A_{a_n-\sum_{l=1}^r \delta_{c_{n+1-l}}}}\bigl(c_{n-r},c_{n+1-r}\bigr)\right)^{\delta_{c_{n+1-r}}}\cdot Q^{E_{e_n-r},A_{a_n-\sum_{l=1}^r \delta_{c_{n+1-l}}},c_{n-r}}(\delta_{c_{n+1-r}})\\
&\hspace{4cm}\times\pi_{c_0,e_n-n,a_n-\sum_{l=0}^{n-1} \delta_{c_{l+1}}}.
\end{align*}
\end{theorem}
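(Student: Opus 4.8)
The plan is to argue by induction on $n$, taking Proposition \ref{prop1} as the base case $n=1$. Throughout, $E_e$ and $A_a$ denote the (unique) age and seniority groups containing the integers $e$ and $a$, as in the statement, and the only ingredients beyond the inductive hypothesis are the one-step transition probabilities recorded in \eqref{probtrans}, the Markov property, and time-homogeneity.

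For the inductive step, I would assume the identity holds at stage $n-1$ for every admissible triple. Since the events $\{(X_{n-1},Y_{n-1},Z_{n-1})=(c_{n-1},e_{n-1},a_{n-1})\}$ partition $\Omega$ as $(c_{n-1},e_{n-1},a_{n-1})$ ranges over $C\times E\times A$, I would use the law of total probability (Lemma \ref{probcondicional3} with $D=\Omega$) followed by the Markov property and homogeneity to obtain
\begin{align*}
\Prob\bigl[(X_n,Y_n,Z_n)=(c_n,e_n,a_n)\bigr]=\sum_{c_{n-1},e_{n-1},a_{n-1}}&\Prob\bigl[(X_1,Y_1,Z_1)=(c_n,e_n,a_n)\ \bigr|\ (X_0,Y_0,Z_0)=(c_{n-1},e_{n-1},a_{n-1})\bigr]\\
&\times\Prob\bigl[(X_{n-1},Y_{n-1},Z_{n-1})=(c_{n-1},e_{n-1},a_{n-1})\bigr].
\end{align*}
By \eqref{probtrans} the one-step factor is zero unless $e_{n-1}=e_n-1$, and its nonzero cases are precisely those with $a_n-a_{n-1}=\delta_{c_n}$; hence the sum over $e_{n-1}$ and $a_{n-1}$ collapses to the single index $e_{n-1}=e_n-1$, $a_{n-1}=a_n-\delta_{c_n}$, and the surviving factor is compactly
\begin{align*}
\left(P^{E_{e_n-1},A_{a_n-\delta_{c_n}}}\bigl(c_{n-1},c_n\bigr)\right)^{\delta_{c_n}}\cdot Q^{E_{e_n-1},A_{a_n-\delta_{c_n}},c_{n-1}}(\delta_{c_n}),
\end{align*}
which reproduces both nonzero lines of \eqref{probtrans} through the exponent $\delta_{c_n}$. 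This leaves a single sum over $c_{n-1}\in C$ of this factor times $\Prob\bigl[(X_{n-1},Y_{n-1},Z_{n-1})=(c_{n-1},e_n-1,a_n-\delta_{c_n})\bigr]$.

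To conclude, I would substitute the inductive hypothesis for that last probability, evaluated at $(c_{n-1},e_n-1,a_n-\delta_{c_n})$, and absorb the extracted factor as the $r=1$ term of the product. The one genuine difficulty is the index bookkeeping: after the reindexing $r\mapsto r+1$ of the stage-$(n-1)$ product and the substitutions $e_{n-1}=e_n-1$, $a_{n-1}=a_n-\delta_{c_n}$, one must verify that the age labels become $E_{e_n-r}$ and that the nested seniority labels telescope into $A_{a_n-\sum_{l=1}^{r}\delta_{c_{n+1-l}}}$ for $r=1,\dots,n$, while the initial seniority label $a_{n-1}-\sum_{l=0}^{n-2}\delta_{c_{l+1}}$ reduces, upon inserting $a_{n-1}=a_n-\delta_{c_n}$, to $a_n-\sum_{l=0}^{n-1}\delta_{c_{l+1}}$. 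All of these rest on the two recursions encoded in \eqref{probtrans}, namely $e_j=e_{j-1}+1$ and $a_j=a_{j-1}+\delta_{c_j}$ (so that $e_n=e_0+n$ and $a_n=a_0+\sum_{j=1}^{n}\delta_{c_j}$); granting them, the two products agree term by term and the claimed expression for $\Prob\bigl[(X_n,Y_n,Z_n)=(c_n,e_n,a_n)\bigr]$ follows, completing the induction.
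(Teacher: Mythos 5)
Your proof is correct and follows essentially the same route as the paper's: induction on $n$ with Proposition \ref{prop1} as the base case, a one-step decomposition of $\Prob\bigl[(X_n,Y_n,Z_n)=(c_n,e_n,a_n)\bigr]$ obtained from the law of total probability, the Markov property, homogeneity and \eqref{probtrans}, followed by substitution of the inductive hypothesis at $(c_{n-1},e_n-1,a_n-\delta_{c_n})$ and the same reindexing of the product with the telescoping of the seniority labels. The only difference is presentational: the paper compresses your explicit one-step derivation into the phrase ``repitiendo los pasos de la prueba de la Proposici\'on \ref{prop1}'', whereas you spell those steps out.
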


\bigskip

\begin{proof}
Se probar\'a por inducci\'on:\\
El caso $n=1$ es la Proposici\'on \ref{prop1}.\\
Asuma que se cumple para $n-1$, y se probar\'a para $n$. Repitiendo los pasos de la prueba de la Proposici\'on \ref{prop1}, pero tomando $n$ en lugar de $1$ y $n-1$ en lugar de $0$, se obtiene,
\begin{align}\label{eq1}
\hspace{-.2cm}\Prob\bigl[(X_n,Y_n,Z_n)=&(c_n,e_n,a_n)\bigr]\\
\hspace{-.2cm}&=\sum_{c_{n-1}\in C}\left(P^{E_{e_n-1},A_{a_n-\delta_{c_n}}}\bigl(c_{n-1},c_n\bigr)\right)^{\delta_{c_n}}\cdot Q^{E_{e_n-1},A_{a_n-\delta_{c_n}},c_{n-1}}\bigl(\delta_{c_n}\bigr)\nonumber\\
\hspace{-.2cm}&\quad\quad\quad\quad\quad\quad\quad\quad\quad\times\Prob\bigl[(X_{n-1},Y_{n-1},Z_{n-1})=(c_{n-1},e_n-1,a_n-\delta_{c_n})\bigr].\nonumber
\end{align}
Por hip\'otesis de inducci\'on,
\begin{align}\label{eq2}
\Prob&\bigl[(X_{n-1},Y_{n-1},Z_{n-1})=(c_{n-1},e_{n-1},a_{n-1})\bigr]\\
&=\sum_{c_0,c_1,\dots,c_{n-2}\in C}\prod_{r=1}^{n-1}\left(P^{E_{e_{n-1}-r},A_{a_{n-1}-\sum_{l=1}^r \delta_{c_{n-l}}}}\bigl(c_{n-1-r},c_{n-r}\bigr)\right)^{\delta_{c_{n-r}}}\cdot Q^{E_{e_{n-1}-r},A_{a_{n-1}-\sum_{l=1}^r \delta_{c_{n-l}}},c_{n-1-r}}(\delta_{c_{n-r}})\nonumber\\
&\quad\quad\quad\quad\quad\quad\quad\quad\quad\quad\quad\quad\times\pi_{c_0,e_{n-1}-(n-1),a_{n-1}-\sum_{l=0}^{n-2} \delta_{c_{l+1}}}\nonumber
\end{align}
Se sustituye \ref{eq2} en \ref{eq1}, tomando en cuenta que $e_{n-1}=e_n-1$ y $a_{n-1}=a_n-\delta_{c_n}$, y pasando el \'indice a $r-1$, se concluye que
\begin{align*}
\Prob\bigl[&(X_n,Y_n,Z_n)=(c_n,e_n,a_n)\bigr]\\
&=\sum_{c_{n-1}\in C}\left(P^{E_{e_n-1},A_{a_n-\delta_{c_n}}}\bigl(c_{n-1},c_n\bigr)\right)^{\delta_{c_n}}\cdot Q^{E_{e_n-1},A_{a_n-\delta_{c_n}},c_{n-1}}\bigl(\delta_{c_n}\bigr)\\
&\quad\quad \times\hspace{-.5cm}\sum_{c_0,c_1,\dots,c_{n-2}\in C}\prod_{r=2}^n\left(P^{E_{e_n-r},A_{a_n-\sum_{l=1}^r \delta_{c_{n+1-l}}}}\bigl(c_{n-r},c_{n+1-r}\bigr)\right)^{\delta_{c_{n+1-r}}}\cdot Q^{E_{e_n-r},A_{a_n-\sum_{l=1}^r \delta_{c_{n+1-l}}},c_{n-r}}(\delta_{c_{n+1-r}})\\
&\hspace{4cm}\times\pi_{c_0,e_n-n,a_n-\sum_{l=0}^{n-1} \delta_{c_{l+1}}}\\
&=\sum_{c_0,c_1,\dots,c_{n-1}\in C}\prod_{r=1}^n\left(P^{E_{e_n-r},A_{a_n-\sum_{l=1}^r \delta_{c_{n+1-l}}}}\bigl(c_{n-r},c_{n+1-r}\bigr)\right)^{\delta_{c_{n+1-r}}}\cdot Q^{E_{e_n-r},A_{a_n-\sum_{l=1}^r \delta_{c_{n+1-l}}},c_{n-r}}(\delta_{c_{n+1-r}})\\
&\hspace{4cm}\times\pi_{c_0,e_n-n,a_n-\sum_{l=0}^{n-1} \delta_{c_{l+1}}}.
\end{align*}
\end{proof}

\bigskip

\begin{corollary}\label{probrangos}
La probabilidad de que una persona, en el a$\tilde{\text{n}}$o $n$, se encuentre en la categor\'ia $c_n$, dentro de los rangos de ``edad'' y ``antig$\ddot{\text{u}}$edad'', $E_n\in\left\{E^{(0)},E^{(1)},E^{(2)},E^{(3)},E^{(4)}\right\}$ y $A_n\in\left\{A^{(1)},A^{(2)},A^{(3)},A^{(4)}\right\}$, respectivamente, ser\'ia
\begin{align*}
\Prob\left[(X_n,Y_n,Z_n)\in\{c_n\}\times E_n\times A_n\right]=\sum_{\substack{e_n\in E_n\\a_n\in A_n}}\Prob\left[(X_n,Y_n,Z_n)=(c_n,e_n,a_n)\right]
\end{align*}
\end{corollary}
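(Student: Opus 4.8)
The plan is to recognize this as a pure additivity statement: the target event is a disjoint union of the elementary events whose probabilities are computed in Theorem \ref{distXYZn}, so the identity follows from countable additivity of $\Prob$, with nothing more than a bookkeeping argument about the index set.

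First I would decompose the underlying set. Directly from the definition of the Cartesian product,
$$\{c_n\}\times E_n\times A_n=\bigcup_{\substack{e_n\in E_n\\a_n\in A_n}}\{(c_n,e_n,a_n)\}.$$
Pulling this back through the random vector $(X_n,Y_n,Z_n)$ gives the corresponding decomposition at the level of events,
$$\bigl\{(X_n,Y_n,Z_n)\in\{c_n\}\times E_n\times A_n\bigr\}=\bigcup_{\substack{e_n\in E_n\\a_n\in A_n}}\bigl\{(X_n,Y_n,Z_n)=(c_n,e_n,a_n)\bigr\}.$$

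Next I would verify that the events on the right are pairwise disjoint: a single outcome $\omega$ determines exactly one value of the pair $(Y_n(\omega),Z_n(\omega))$, hence it can lie in at most one of the sets indexed by distinct pairs $(e_n,a_n)$. Applying additivity of $\Prob$ over this disjoint union (the unconditional analogue of Lemma \ref{probcondicional2}) then yields
$$\Prob\bigl[(X_n,Y_n,Z_n)\in\{c_n\}\times E_n\times A_n\bigr]=\sum_{\substack{e_n\in E_n\\a_n\in A_n}}\Prob\bigl[(X_n,Y_n,Z_n)=(c_n,e_n,a_n)\bigr],$$
which is precisely the claimed identity.

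I do not expect any genuine obstacle here. Since $E=[E_l,E_u)\cap\mathbb{Z}$ and $A=[0,A_u)\cap\mathbb{Z}$ are finite, the ranges $E_n$ and $A_n$ are finite and the sum is a finite one, so only finite additivity is needed and no convergence question arises. The single point deserving explicit care is the disjointness of the events, which I would justify as above; everything else is immediate. If desired, each summand $\Prob[(X_n,Y_n,Z_n)=(c_n,e_n,a_n)]$ could be replaced by its closed form from Theorem \ref{distXYZn}, but the corollary as stated requires nothing beyond the additivity step.
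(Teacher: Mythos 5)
Your proof is correct and is precisely the argument the paper leaves implicit: the corollary is stated without proof, being regarded as an immediate consequence of decomposing $\{c_n\}\times E_n\times A_n$ into singletons and applying finite additivity of $\Prob$ over the resulting disjoint events. Your explicit check of disjointness and finiteness of the index set fills in exactly that omitted routine step, with nothing different in substance.
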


\bigskip

\begin{corollary}\label{poblacion}
De una poblaci\'on inicial $I_0$, la cantidad esperada de personas $\Mean\bigl[I_n^{c_n,E_n,A_n}\bigr]$, en la categor\'ia $c_n\in C$, dentro de los rangos de edad $E_n\in\left\{E^{(0)},E^{(1)},E^{(2)},E^{(3)},E^{(4)}\right\}$ y de antig$\ddot{\text{u}}$edad $A_n\in\left\{A^{(1)},A^{(2)},A^{(3)},A^{(4)}\right\}$, para el $n-$\'esimo a$\tilde{\text{n}}$o ser\'ia
$$\Mean\bigl[I_n^{c_n,E_n,A_n}\bigr]=I_0\cdot\Prob\bigl[(X_n,Y_n,Z_n)\in\{c_n\}\times E_n\times A_n\bigr].$$
\end{corollary}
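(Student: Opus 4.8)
The plan is to realize the counting variable $I_n^{c_n,E_n,A_n}$ as a sum of indicator random variables, one per member of the starting cohort, and then to combine linearity of expectation with the modeling hypothesis that every individual evolves according to the same homogeneous chain. Concretely, I would index the $I_0$ initial individuals by $j=1,\dots,I_0$, write $(X_n^{(j)},Y_n^{(j)},Z_n^{(j)})$ for the triplet of individual $j$ in year $n$, and record the decomposition
\[
I_n^{c_n,E_n,A_n}=\sum_{j=1}^{I_0}\ind{\{(X_n^{(j)},Y_n^{(j)},Z_n^{(j)})\in\{c_n\}\times E_n\times A_n\}},
\]
which is just the statement that the population count is the number of individuals whose triplet lands in $\{c_n\}\times E_n\times A_n$.

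Next I would take expectations and invoke linearity, which needs no independence among individuals and is unconditionally valid here since the state space is finite, to obtain
\[
\Mean\bigl[I_n^{c_n,E_n,A_n}\bigr]=\sum_{j=1}^{I_0}\Prob\bigl[(X_n^{(j)},Y_n^{(j)},Z_n^{(j)})\in\{c_n\}\times E_n\times A_n\bigr].
\]
The decisive observation is that each individual is assumed to start from the common initial distribution $\pi$ and to transition with the common probabilities $P$ and $Q$; hence the marginal law of $(X_n^{(j)},Y_n^{(j)},Z_n^{(j)})$ does not depend on $j$, so every summand equals the single value $\Prob[(X_n,Y_n,Z_n)\in\{c_n\}\times E_n\times A_n]$. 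Summing $I_0$ identical terms yields the claim, and that common probability is precisely the quantity furnished by Corollary \ref{probrangos}, which in turn sums the point probabilities of Theorem \ref{distXYZn} over $e_n\in E_n$ and $a_n\in A_n$.

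The main obstacle is conceptual rather than computational: the argument hinges entirely on making explicit the identically-distributed (exchangeable) hypothesis for the cohort, since it is this property — and not any analytic estimate — that legitimately converts the per-individual law into the multiplicative factor $I_0$. I would therefore be careful to state that assumption at the outset, after which the proof is a one-line application of linearity of expectation.
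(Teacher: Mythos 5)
Your proof is correct, and it is essentially the argument the paper leaves implicit: Corollary \ref{poblacion} is stated there with no proof at all, as an immediate consequence of Corollary \ref{probrangos}, and the intended justification is exactly your decomposition of $I_n^{c_n,E_n,A_n}$ into per-individual indicators followed by linearity of expectation and the identical-distribution hypothesis for the cohort. Your insistence on stating that hypothesis explicitly is a genuine improvement in rigor, and it is consistent with how the paper itself treats the population later, in the Monte Carlo section, where the counts are drawn as a Multinomial with $I_0$ trials and cell probabilities given by the per-individual law --- precisely the model your indicator sum encodes.
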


\subsection{Distribuci\'on de las Caracter\'isticas}

Sea $(W_1,W_2,\dots,W_{N_S})$ el vector aleatorio de caracter\'isticas.

\bigskip

{\bf Hip\'otesis:} Asumimos el vector aleatorio $(W_1,W_2,\dots,W_{N_S})$ es estacionario, por lo que podemos considerarlo independiente del tiempo. A\'un m\'as, se asume que, para $\{c_0\}\times E_0\times A_0\subset C\times E\times A$,
\begin{align*}
\Prob&\bigl[W_i=S^{j_i}_i,i=1,\dots,N_S\ \bigr/\ (X_n,Y_n,Z_n)\in\{c_0\}\times E_0\times A_0\bigr]\\
&\quad\quad\quad\quad=\Prob\bigl[W_i=S^{j_i}_i,i=1,\dots,N_S\ \bigr/\ (X_0,Y_0,Z_0)\in\{c_0\}\times E_0\times A_0\bigr]
\end{align*}

\smallskip

Defina la probabilidad estacionaria de la distribuci\'on de la poblaci\'on respecto a las caracter\'isticas, como
\begin{align}\label{distcaract}
R^{c_0,E_0,A_0}\bigl(S^{j_1}_1,\dots,S^{j_{N_S}}_{N_S}\bigr):=\Prob\bigl[W_i=S^{j_i}_i,i=1,\dots,N_S\ \bigr/\ (X_0,Y_0,Z_0)\in\{c_0\}\times E_0\times A_0\bigr]
\end{align}

\smallskip

\begin{corollary}\label{poblacioncar}
De una poblaci\'on inicial $I_0$, la cantidad esperada de personas $\Mean\bigl[I_n^{c_n,E_n,A_n}\bigl(S^{j_1}_1,\dots,S^{j_{N_S}}_{N_S}\bigr)\bigr]$, en la categor\'ia $c_n\in C$, dentro de los rangos de edad y antig$\ddot{\text{u}}$edad, $E_n\in\left\{E^{(0)},E^{(1)},E^{(2)},E^{(3)},E^{(4)}\right\}$ y $A_n\in\left\{A^{(1)},A^{(2)},A^{(3)},A^{(4)}\right\}$, respectivamente, para el $n-$\'esimo a$\tilde{\text{n}}$o, y con las caracter\'isticas $S^{(j_i)}_i$, para $i=1,\dots,N_S$, ser\'ia
$$\Mean\bigl[I_n^{c_n,E_n,A_n}\bigl(S^{j_1}_1,\dots,S^{j_{N_S}}_{N_S}\bigr)\bigr]=I_0\cdot R^{c_n,E_n,A_n}\bigl(S^{j_1}_1,\dots,S^{j_{N_S}}_{N_S}\bigr)\cdot\Prob\bigl[(X_n,Y_n,Z_n)\in\{c_n\}\times E_n\times A_n\bigr].$$
\end{corollary}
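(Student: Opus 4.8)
The plan is to mirror the counting argument behind Corollary~\ref{poblacion} and then insert the characteristic factor through the stationarity hypothesis that defines $R$ in \eqref{distcaract}.

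First I would regard each of the $I_0$ initial individuals as an independent, identically distributed realization of the chain $(X_n,Y_n,Z_n)$ that additionally carries its own characteristic vector $(W_1,\dots,W_{N_S})$, and express the count as a sum of indicators
$$I_n^{c_n,E_n,A_n}\bigl(S_1^{j_1},\dots,S_{N_S}^{j_{N_S}}\bigr)=\sum_{k=1}^{I_0}\ind{B_k},$$
where $B_k$ is the event that individual $k$ simultaneously satisfies $(X_n,Y_n,Z_n)\in\{c_n\}\times E_n\times A_n$ and $W_i=S_i^{j_i}$ for every $i=1,\dots,N_S$. By linearity of expectation and the fact that the individuals are identically distributed — the same observation that underlies Corollary~\ref{poblacion} — this gives $\Mean\bigl[I_n^{c_n,E_n,A_n}(\cdots)\bigr]=I_0\cdot\Prob[B_1]$.

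Next I would factor the single-individual probability $\Prob[B_1]$ by conditioning on the state group at year $n$,
\begin{align*}
\Prob[B_1]&=\Prob\bigl[W_i=S_i^{j_i},\,i=1,\dots,N_S \mid (X_n,Y_n,Z_n)\in\{c_n\}\times E_n\times A_n\bigr]\\
&\quad\times\Prob\bigl[(X_n,Y_n,Z_n)\in\{c_n\}\times E_n\times A_n\bigr].
\end{align*}
The second factor is exactly the probability appearing in the statement, so the whole task reduces to identifying the first factor with $R^{c_n,E_n,A_n}\bigl(S_1^{j_1},\dots,S_{N_S}^{j_{N_S}}\bigr)$.

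The key step — essentially the only part that is not bookkeeping — is this last identification. Definition~\eqref{distcaract} writes $R$ as the conditional characteristic distribution given the state group at year $0$, whereas here we are conditioning at year $n$. This is precisely what the stationarity hypothesis stated just before \eqref{distcaract} supplies: the conditional law of $(W_1,\dots,W_{N_S})$ given a state group is the same at year $n$ as at year $0$. Without that assumption the two conditional distributions need not coincide and the formula would fail, so this is where the substantive content lies. Substituting $R^{c_n,E_n,A_n}(\cdots)$ for the first factor and combining with $\Mean[\cdots]=I_0\cdot\Prob[B_1]$ then delivers the claimed identity.
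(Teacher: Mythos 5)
Your proof is correct and follows exactly the route the paper intends: the paper states this corollary without an explicit proof, but its implicit argument is precisely your combination of the counting identity behind Corollary~\ref{poblacion} (expectation of an indicator sum over $I_0$ identically distributed individuals) with the conditional-probability factorization, where the stationarity hypothesis preceding \eqref{distcaract} lets you replace the year-$n$ conditional law of the characteristics by $R^{c_n,E_n,A_n}$. You also correctly flag that this stationarity assumption is the only substantive step — the same factorization $V^{n,i,r,k}_{j_1,\dots,j_{N_S}} = P^n_{r,i,k}\,R^{r,i,k}_{j_1,\dots,j_{N_S}}$ is invoked again in the paper's Monte Carlo section ``por definici\'on de probabilidad condicional.''
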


\section{Ajuste del Modelo}\label{ajusteprob}

Se considera una base hist\'orica donde cada individuo es registrado mes a mes, con su categor\'ia, su edad, y su antig$\ddot{\text{u}}$edad; as\'i como las caracetr\'isticas del mismo. Se trabajar\'a el escenario de ajuste con categor\'ias mensual, puesto que es m\'as probable que falten datos a que sobren, sin embargo, el ajuste con datos anuales se sigue de manera sencilla. Adem\'as, se asume que solo se tiene informaci\'on de los individuos dentro del sistema, mientras que la informaci\'on de las personas fuera del sistema no se posee.

\smallskip

\subsection{Poblaci\'on Total/Reserva}

Se toma una poblaci\'on total de $I_0$ personas,\footnote{Se puede tomar como la PEA para un fondo de pensiones como el IVM, o como un n\'umero definido en el caso de una instituci\'on que no posee un n\'umero representativo de la PEA.} distribuidas de la siguiente manera:
\begin{itemize}

\item $N_m^{C^{(r)},e,a}$ es la cantidad de personas que en el mes $m$ se encontraban en la categor\'ia $C^{(r)}$, con $e$ \anos de edad, y $a$ \anos de antig$\ddot{\text{u}}$edad, para $r=1,\dots,N_C$, y $e\in E$, $a\in A$.

\item Se asume que existen $N_e$ personas de edad $e\in[E_l,E_u]$, es decir, hay $N_{E_l}$ personas con edad $E_l$ a\~nos, $N_{E_l+1}$ personas con edad $E_l+1$ a\~nos, y as\'i sucesivamente hasta llegar a $N_{E_u}$ personas con edad $E_u$ a\~nos. De este modo, el n\'umero de personas con edad $e$ a\~nos que est\'an en la categor\'ia $C^{(0)}$ durante el mes $m$ ser\'ian
$$N_m^{C^{(0)},e}=N_e-\sum_{r=1}^{N_C}\sum_{a\in A}N_m^{C^{(r)},e,a}.$$

\item Esta poblaci\'on ``reserva'' de $e$ \anos de edad, que se encuentra en la categor\'ia $C^{(0)}$, se asume uniformemente distribuida entre antig$\ddot{\text{u}}$edades, es decir, se divide la poblaci\'on $N_m^{C^{(0)},e}$ en partes iguales, y dicha cantidad corresponder\'ia a $N_m^{C^{(0)},e,a}$ para los valores de $a$ posibles dada la edad $e$. Por el contrario, $N_m^{C^{(0)},e,a}=0$ para las antig\"uedades $a$ que son imposibles con la edad $e$.

\end{itemize}

\subsection{Distribuci\'on Inicial}

Sea $\mathcal{M}$ el conjunto de meses observados, donde $m=0$ para el \'ultimo mes observado, $m=-1$ para el pen\'ultimo mes observado, y as\'i sucesivamente, hasta llegar a $m=-M$ que corresponder\'ia al mes m\'as antiguo observado. Se emplea la contabilizaci\'on anterior, tomando $N_0^{C^{(r)},e,a}$ como el n\'umero promedio de personas en la categor\'ia $C^{(r)}$, con $e$ \anos de edad, y $a$ \anos de antig$\ddot{\text{u}}$edad, definido por
$$N_0^{C^{(r)},e,a}=\frac{1}{12}\sum_{m=-11}^0N_m^{C^{(r)},e,a}.$$
El promedio empleado es sobre el \'ultimo a\~no observado, el cual es la base para la proyecci\'on, pero puede hacerse para m\'as periodos.

\bigskip

Se toma la distribuci\'on inicial de la poblaci\'on como
$$\pi_{c_0,e_0,a_0}=\frac{N_0^{c_0,e_0,a_0}}{I_0}=\frac{\text{Poblaci\'on de categor\'ia $c_0$, edad $e_0$, y antig$\ddot{\text{u}}$edad $a_0$}}{\text{Poblaci\'on Total}},$$
para una tripleta $(c_0,e_0,a_0)\in C\times E\times A$.

\subsection{Probabilidades de Transici\'on Mensual}

Para el $m-$\'esimo mes, tome
$$N_m^{C^{(r)},E^{(i)},A^{(k)}}=\sum_{e\in E^{(i)},a\in A^{(k)}}N_m^{C^{(r)},e,a},$$
como el n\'umero total de personas que estaban en la categor\'ia $C^{(r)}$, con edad y antig$\ddot{\text{u}}$edad en los rangos $E^{(i)}$ y $A^{(k)}$, respectivamente, en el mes $m$.

\bigskip

Por otro lado, sea $N_m^{E^{(i)},A^{(k)}}\left(C^{(r)},C^{(l)}\right)$, el n\'umero de personas con edad y antig$\ddot{\text{u}}$edad en los rangos $E^{(i)}$ y $A^{(k)}$, respectivamente, que se encontraba en el mes $m$ en la categor\'ia $C^{(r)}$, despu\'es de un mes (en el mes $m+1$), se encontraba en la categor\'ia $C^{(l)}$.

\bigskip

La probabilidad de transici\'on mensual del $m-$\'esimo m\'es estar\'ia dada por\footnote{Observe que se resta del denominador el n\'umero de personas que llegaron a $C^{(0)}$, esto debido a que las transiciones se cuentan solamente para aquellos movimientos dentro del sistema. Sin embargo, no hay restricci\'on en que la persona pase de $C^{(0)}$ a una categor\'ia dentro del sistema.}
$$\widetilde{P}_m^{E^{(i)},A^{(k)}}\left(C^{(r)},C^{(l)}\right)=\frac{N_m^{E^{(i)},A^{(k)}}\left(C^{(r)},C^{(l)}\right)}{N_m^{C^{(r)},E^{(i)},A^{(k)}}-N_m^{E^{(i)},A^{(k)}}\left(C^{(r)},C^{(0)}\right)}.$$
El estimador de esta probabilidad ser\'ia
$$\widetilde{P}^{E^{(i)},A^{(k)}}\left(C^{(r)},C^{(l)}\right)=\frac{1}{|\mathcal{M}|}\sum_{m\in \mathcal{M}}\widetilde{P}_m^{E^{(i)},A^{(k)}}\left(C^{(r)},C^{(l)}\right),$$
donde $|\mathcal{M}|$ es el n\'umero de elementos que tiene $\mathcal{M}$. Adem\'as, seg\'un Zucchini \cite[p. 21]{Zucchini}, para una cadena de Markov este es un estimador insesgado de las probabilidades de transici\'on.

\subsection{Probabilidades de Transici\'on Anual}

Una vez estimadas las probabilidades de transici\'on mensuales, se obtienen las anuales con la f\'ormula \eqref{probtransmens}
$$P^{E^{(i)},A^{(k)}}\bigl(C^{(r)},C^{(l)}\bigr)=\sum_{t=1}^{12}\sum_{\widetilde{c}_1,\dots,\widetilde{c}_{t-1}\in C}\prod_{k=1}^{t}\widetilde{P}^{E^{(i)},A^{(k)}}\bigl(\widetilde{c}_k,\widetilde{c}_{k-1}\bigr)\cdot p^{(i)}_t,$$
con $\widetilde{c}_0=C^{(r)}$ y $\widetilde{c}_t=C^{(l)}$.

\subsection{Probabilidades de Ingreso al Sistema}

Sea $\mathcal{N}$ el conjunto de a\~nos observados. $N_n^{E^{(i)},A^{(k)},C^{(r)}}(1)$ el n\'umero de personas de la categor\'ia $C^{(r)}$, con edad y antig$\ddot{\text{u}}$edad en los rangos $E^{(i)}$ y $A^{(k)}$, respectivamente, que en el $n-$\'esimo a\ene o fueron contratados, y $N_n^{E^{(i)},A^{(k)},C^{(r)}}(0)$ aquellos con las mismas caracter\'isticas que no lo fueron. 

\bigskip

Las probabilidades de ingreso al sistema del $n-$\'esimo a\ene o estar\'ian dads por
$$Q_n^{E^{(i)},A^{(k)},C^{(r)}}(l)=\frac{N_n^{E^{(i)},A^{(k)},C^{(r)}}(l)}{N_n^{E^{(i)},A^{(k)},C^{(r)}}(0)+N_n^{E^{(i)},A^{(k)},C^{(r)}}(1)}.$$
El estimador de esta probabilidad ser\'ia
$$Q^{E^{(i)},A^{(k)},C^{(r)}}(l)=\frac{1}{|\mathcal{N}|}\sum_{n\in\mathcal{N}}Q_n^{E^{(i)},A^{(k)},C^{(r)}}(l).$$

\subsubsection{Distribuci\'on de la Poblaci\'on por Caracter\'isticas}

Tomamos el dato $N_m^{C^{(r)},E^{(i)},A^{(k)}}$ del n\'umero total de personas que en el $m-$\'esimo mes pertenec\'ian a la categor\'ia $C^{(r)}$, dentro del grupo de edad $E^{(i)}$ y rango de antig$\ddot{\text{u}}$edad $A^{(k)}$. De este grupo, obtenga el n\'umero $N_m^{C^{(r)},E^{(i)},A^{(k)}}(S^{j_1}_1,\dots,S^{j_{N_S}}_{N_S})$ de personas que ten\'ian las caracter\'isticas salariales $(S^{j_1}_1,\dots,S^{j_{N_S}}_{N_S})$, durante el mes $m$. Utilizando la f\'ormula \eqref{distcaract},
$$R_m^{C^{(r)},E^{(i)},A^{(k)}}\bigl(S^{j_1}_1,\dots,S^{j_{N_S}}_{N_S}\bigr)=\frac{N_m^{C^{(r)},E^{(i)},A^{(k)}}\bigl(S^{j_1}_1,\dots,S^{j_{N_S}}_{N_S}\bigr)}{N_m^{C^{(r)},E^{(i)},A^{(k)}}}.$$
El estimador de esta probabilidad ser\'ia
$$R^{C^{(r)},E^{(i)},A^{(k)}}\bigl(S^{j_1}_1,\dots,S^{j_{N_S}}_{N_S}\bigr)=\frac{1}{|\mathcal{M}|}\sum_{m\in\mathcal{M}}R_m^{C^{(r)},E^{(i)},A^{(k)}}\bigl(S^{j_1}_1,\dots,S^{j_{N_S}}_{N_S}\bigr).$$

\section{Simulaci\'on de Montecarlo}

Se desea proyectar la poblaci\'on utilizando una simulaci\'on de Montecarlo. Defina
$$P^n_{r,i,k}:=\Prob\left[(X_n,Y_n,Z_n)\in\{C^{(r)}\}\times E^{(i)}\times A^{(k)}\right];\quad \text{ y }\quad R^{r,i,k}_{j_1,\dots,j_{N_S}}:=R^{C^{(r)},E^{(i)},A^{(k)}}\bigl(S^{j_1}_1,\dots,S^{j_{N_S}}_{N_S}\bigr).$$

Luego, defina: 

\vspace{-.8cm}

$$ V^{n,i,r,k }_{j_1,\dots,j_{N_S}} := P^n_{r,i,k} R^{r,i,k}_{j_1,\dots,j_{N_S}}$$ 

Que representa la probabilidad de que una persona est\'e en la $\{N_S+3\}$-tupla dada por: ${i,r,k,j_1,\dots,j_{N_S}}$ en el a\~no $n$, por definici\'on de probabilidad condicional.

\subsection{Algoritmo}

Suponga que se van a proyectar $N$ a\~nos, utilizando la t\'ecnica de Montecarlo con 10.000 iteraciones por a\~no.

\smallskip

{\bf Pseudoc\'odigo:}
\begin{itemize}
\item{\bf Recibe:} Probabilidades $V^{n,i,r,k }_{j_1,\dots,j_{N_S}}$, y la poblaci\'on inicial $I_0$

\item Inicie $I_1^{C^{(r)},E^{(i)},A^{(k)}}\bigl(S^{j_1}_1,\dots,S^{j_{N_S}}_{N_S}\bigr)=0,\dots,I_{N}^{C^{(r)},E^{(i)},A^{(k)}}\bigl(S^{j_1}_1,\dots,S^{j_{N_S}}_{N_S}\bigr)=0$, para todo $r,i,k,j_1,\dots,j_{N_S}$.
\item Para $n=1,\dots N$ (proyecci\'on para $N$ a$\tilde{\text{n}}$os) genere el siguiente vector aleatorio
$$\left\{I_n^{C^{(r)},E^{(i)},A^{(k)}}\bigl(S^{j_1}_1,\dots,S^{j_{N_S}}_{N_S}\bigr)\right\}_{i,r,k,j_1,\dots,j_{N_S}} \sim Multinom\big(I_0;\{V^{n,i,r,k }_{j_1,\dots,j_{N_S}}\}_{i,r,k,j_1,\dots,j_{N_S}}\big).$$

\item {\bf Devuelve:} La poblaci\'on para cada categor\'ia, grupo de edad, rango de antig$\ddot{\text{u}}$edad, y cada a$\tilde{\text{n}}$o en el futuro, agregada y desagregada por caracter\'isticas, para los siguientes $N$ a$\tilde{\text{n}}$os.
\end{itemize}

\begin{figure}[ht]

\begin{center}
\includegraphics[width = 12cm]{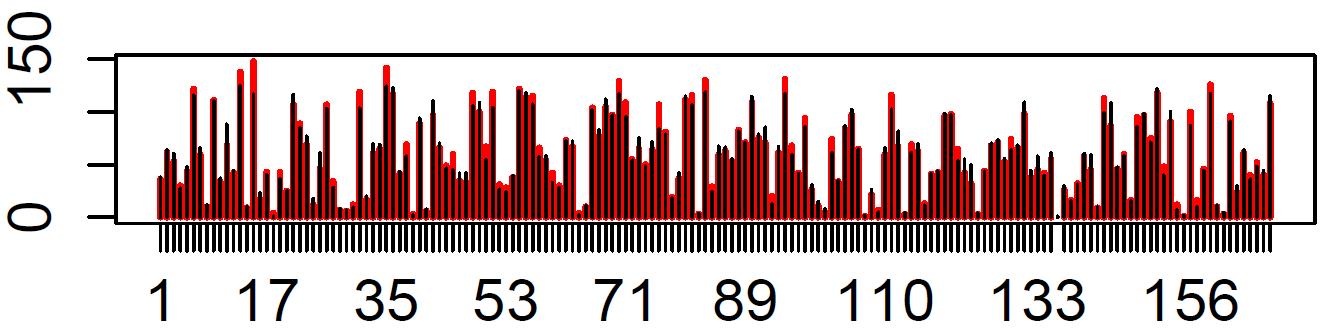}
\caption{Histograma de la Generaci\'on (Rojo) Vs Probabilidad Estimadas (Negro)}
\end{center}

\end{figure}

\section{Ejemplo de C\'alculo de Gastos por Remuneraciones}

Se utilizaron los datos de planillas de una instituci\'on p\'ublica para los a\~nos 2004-2015. Las proyecciones se har\'an por un rango de 25 a\~nos. Adem\'as, se consideran los grupos de edad de la siguiente forma:
\begin{itemize}

\item Grupo $E^{(0)}$: Personas menores a 18 a\~nos, considerados como potenciales empleados al cumplir 18 a\~nos.

\item Grupo $E^{(1)}$: Aquellas personas con edades entre los 18 a\~nos y los 30 a\~nos (no cumplidos).

\item Grupo $E^{(2)}$: Individuos con edades entre los 30 a\~nos y hasta los 40 a\~nos (no cumplidos).

\item Grupo $E^{(3)}$: Compuesto por personas con edades entre los 40 a\~nos y los 50 a\~nos (no cumplidos).

\item Grupo $E^{(4)}$: Este grupo es considerado como el de las potenciales jubilaciones, pues est\'a integrados por individuos con edades superiores a los 50 a\~nos.

\end{itemize}
Del mismo modo, se consideran cuatro grupos de antig\"uedades (indicador de su ligamen con la instituci\'on):
\begin{itemize}

\item Grupo $A^{(1)}$: Personas con menos de 15 a\~nos de trabajar en la instituci\'on (no necesariamente consecutivos).

\item Grupo $A^{(2)}$: Individuos que han trabajado para la entidad entre 15 y 30 a\~nos (no necesariamente consecutivos).

\item Grupo $A^{(3)}$: Integrado por trabajadores que han laborado entre 30 y 45 a\~nos (no necesariamente consecutivos).

\item Grupo $A^{(4)}$: Compuesto por personas con un nivel elevado de relaci\'on con la instituci\'on, habiendo laborado por m\'as de 45 a\~nos (no necesariamente consecutivos).

\end{itemize}
La tripleta, compuesta por categor\'ia, grupo de edad y antig\"uedad, utiliza las categor\'ias salariales dadas por:
\begin{center}
$C^{(1)}$ = 11;\quad $C^{(2)}$ = 12;\quad $C^{(3)}$ = 13;\quad $C^{(4)}$ = 14;\quad $C^{(5)}$ = 21;\quad $C^{(6)}$ = 22;\quad $C^{(7)}$ = 23;\quad$C^{(8)}$ = 24;\\
$C^{(9)}$ = 25;\quad $C^{(10)}$ = 31;\quad $C^{(11)}$ = 32;\quad $C^{(12)}$ = 34;\quad $C^{(13)}$ = 35;\quad $C^{(14)}$ = 36;\quad $C^{(15)}$ = 37;\quad $C^{(16)}$ = 38;\\
$C^{(17)}$ = 41;\quad $C^{(18)}$ = 42;\quad $C^{(19)}$ = 43;\quad $C^{(20)}$ = 44;\quad $C^{(21)}$ = 45;\quad $C^{(22)}$ = 47;\quad $C^{(23)}$ = 48;\quad $C^{(24)}$ = 49;\\
$C^{(25)}$ = 53;\quad $C^{(26)}$ = 54;\quad $C^{(27)}$ = 57;\quad $C^{(28)}$ = 63;\quad $C^{(29)}$ = 79;\quad $C^{(30)}$ = 82;\quad $C^{(31)}$ = 84;\quad $C^{(32)}$ = 86;\\
$C^{(33)}$ = 87;\quad $C^{(34)}$ = 88;\quad $C^{(35)}$ = 89;\quad $C^{(36)}$ = 90;\quad $C^{(37)}$ = 91.
\end{center}
Por su parte, se consideraron como ``caracter\'isticas'' solamente las siguientes: anualidad, dedicaci\'on exclusiva, prohibici\'on, disponibilidad, r\'egimen de pensiones, y jornada laboral. Esto por cuanto los gastos derivados de los cuatro primeros representan el 90\% de los gastos en pluses salariales, el quinto se emplea para calcular el correcto monto gastado por la universidad en aportes patronales a pensiones, as\'i como para poder efectuar el respectivo analisis sobre el efecto del r\'egimen sobre las jubilaciones, y el \'ultimo debido a que no todos los empleados trabajan en jornadas de 40 horas semanales (algunos trabajan m\'as y otros trabajan menos horas).

\bigskip

Para un trabajador de categor\'ia salarial $C^{(i)}$ (asociada un\'ivocamente a un salario base $W_i$ seg\'un la escala salarial del segundo semestre del 2015), con jornada laboral $J$, y porcentajes\footnote{Estos porcentajes pueden ser 0\%.} de anualidad $A$, dedicaci\'on exclusiva $DX$, prohibici\'on $P$ y disponibilidad $D$, se calcula el gasto anual en el a\~no $N$ por concepto de sus salarios sin garant\'ias sociales $G$ con la siguiente f\'ormula:
\begin{align*}
G&(N,J,W_i,A,DX,P,D)\\
&=\frac{J}{40} \cdot \left[6\cdot W_i\cdot \left((1+3,88\%)^{(N - 2016)+1/2} + (1+3,88\%)^{N - 2015}\right)\right]\cdot \frac{(1+ A + DX + P + D)}{0.90}.
\end{align*}
El primer t\'ermino $(J/40)$ nos indica el porcentaje del salario que recibe por la proporci\'on de horas que trabaja respecto de la jornada completa. El segundo t\'ermino nos contabiliza los 12 salarios base, tomando en cuenta que hay 6 salarios correspondientes a $2\cdot(N - 2016)+1$ aumentos\footnote{Considerados para efectos de este estudio como iguales a la inflaci\'on esperada del 3,88\%.} semestrales desde diciembre del 2015, y otros 6 salarios con $(N - 2015)$ aumentos anuales desde diciembre del 2015. El \'ultimo t\'ermino nos presenta lo gastado por estos 12 salarios, m\'as su porcentaje de anualidad, dedicaci\'on exclusiva, prohibici\'on, y disponibilidad. Como estos no son todos los pluses salariales, y dado que la suma de los pluses que no est\'an siendo considerados representan el 10\% de los gastos totales en salarios, se procede a normalizar el monto calculado por el factor $\frac{1}{0,90}$, estimando as\'i el verdadero gasto anual para este trabajador.

\bigskip

Del mismo modo, para un trabajador con las caracter\'isticas anteriormente indicadas, aunado al porcentaje de cotizaci\'on de r\'egimen de pensiones $R$, el porcentaje de salario escolar $E$, 14,25\% de garant\'ias sociales (Seguro de Enfermedad y Maternidad, Ley de Protecci\'on al Trabajador, Banco Popular)\footnote{La instituci\'on est\'a excenta de pagar IMAS, Asignaciones Familiares e INA.}, 4,25\% para el Fondo de Cesant\'ia y Asociaciones, 8,33\% de Aguinaldo, y 0,25\% por Riesgos de Trabajo del INS, se utiliza la siguiente f\'ormula para estimar el gasto total anual $GT$ en este empleado:
\begin{align*}
GT&(N,J,W_i,A,DX,P,D,R,E)\\
&=G(N,J,W_i,A,DX,P,D)\cdot(1+ E)\cdot\bigl((1 + R + 14,25\% + 4,25\%) + 8,33\%\bigr)*(1 + 0,25\%).
\end{align*}
Seg\'un directiva presidencial, el salario escolar ser\'a incrementado en tractos durante los a\~nos 2016-2018 hasta alcanzar el valor de un salario completo (como el aguinaldo). Esto fue tomado en cuenta, de modo que los porcentajes que se utilizaron para $E$ fueron de 8,19\% hasta el 2015, pasando a 8,23\% en el 2016, 8,28\% en el 2017, y manteni\'endose en un nivel de 8,33\% a partir del 2018. Por su parte, se utiliz\'o un porcentaje de aporte patronal de 6,75\% si el empleado est\'a en el llamado ``Capitalizaci\'on'' (de JUPEMA), de 5\% si est\'a en el de ``Reparto'' (de JUPEMA), y se sigui\'o el Transitorio XI de la CCSS para los empleados en el IVM, es decir, se tomaron aportes patronales de 4.92\% en el 2014, de 5.08\% entre los a\~nos 2015-2019, de 5.25\% en el rango 2020-2024, de 5.42\% entre el 2025 y el 2029, de 5.58\% para el quinquenio 2030-2034, y de 5.75\% a partir del 2035.

\bigskip

Se considera que los tiempos de parada $T_i$ se distribuyen uniformemente, es decir, $p^{(i)}_t=\frac{1}{12}$ para todo $i$. Adem\'as, las probabilidades de transici\'on se ponderaron por sus jornadas, de modo que cada trabajador aporta a las transiciones dependiendo de las horas que laboran en la instituci\'on.

\bigskip

Para medir la calidad del ajuste, as\'i como intentar cuantificar la certeza de sus proyecciones, se implement\'o el m\'etodo conocido como ``backtesting'', el cual consiste en utilizar los datos de los a\~nos 2004-2013 para ajustar los par\'ametros, y contrastar los valores observados en los a\~nos 2014-2015 contra los proyectados por el modelo. Inicialmente notamos que las predicciones globales son bastante buenas, donde el gasto total (de las cuentas descritas por este modelo) cuantificaron 64.215.040.730 colones en el 2014 y 70.044.868.080 colones en el 2015, mientras que lo esperado\footnote{Nos referimos a la ``Esperanza Matem\'atica'' de la proyecci\'on.} por el modelo rondaba los 64.149.862.676 colones para el 2014 y 69.262.303.902 colones para el 2015. Se estar\'ia contando con un error de 0,1\% en el 2014 y de 1,12\% para el 2015.

\bigskip

Se comprueba que el nivel de predicci\'on del modelo respecto a los valores observados (tanto en n\'umero de trabajadores como en colones gastado en sus salarios) es muy alta, validando la confiabilidad de las conclusiones que de este modelo se deriven. Al final del documento (secci\'on de anexos) se presentan los gr\'aficos sobre poblaci\'on observada versus esperada, as\'i como los gastos calculados contra los observados, tanto para las tres categor\'ias m\'as importantes, as\'i como para los tres grupos de edad m\'as relevantes.

\subsection{Conclusiones}

Se comprob\'o, con datos reales, c\'omo el modelo logra ajustar y proyectar, de manera confiable, los valores futuros tanto demogr\'aficos como financieros. Adem\'as, representa una herramienta \'util para trabajar el problema de proyectar poblaciones para fondos de pensiones o para instituciones p\'ublicas. El modelo presenta un reto de programaci\'on, pero una vez hecha esta inversi\'on, el mismo posee suficiente flexibilidad como para poder efectuar an\'alisis demogr\'aficos (como poblaciones en v\'ias de jubilaci\'on) as\'i como financieros (montos cotizados por los trabajadores a lo largo de la proyecci\'on). Se pueden incluir comportamientos futuros, como aumentos en probabilidades de transici\'on en ciertas categor\'ias, o aumentos/reducciones en pluses salariales.

\end{document}